\pdfoutput=1
\documentclass[format=acmsmall,natbib,nonacm]{acmart}

\copyrightyear{2020}
\copyrightyear{2020}
\acmYear{2020}
\setcopyright{acmlicensed}
\settopmatter{printfolios=false,printacmref=false}



\PassOptionsToPackage{dvipsnames}{xcolor}

\looseness=-1

\usepackage{algorithm}
\usepackage{algorithmicx}
\usepackage{algpseudocode}
\usepackage{graphicx}
\usepackage{amsmath}
\usepackage{multirow}
\usepackage{hyperref}
\usepackage{cleveref}
\usepackage{natbib}
\usepackage{paralist}
\usepackage[T1]{fontenc}
\usepackage{beramono}
\usepackage{listings}
\usepackage[dvipsnames]{xcolor}
\usepackage[inline]{enumitem}
\usepackage{acronym}
\usepackage{bm}
\usepackage[skip=0pt]{caption}
\usepackage[skip=0pt]{subcaption}
\usepackage{bbm}
\usepackage{xifthen}
\usepackage{amsthm}
\usepackage{siunitx}
\usepackage{booktabs}
\usepackage[np, autolanguage]{numprint}
\usepackage{mathtools}


\acrodef{IR}{Information Retrieval}
\acrodef{LTR}{Learning to Rank}
\acrodef{PL}{Plackett-Luce}
\acrodef{SGD}{Stochastic Gradient Descent}
\acrodef{CRM}{Counterfactual Risk Minimization}
\acrodef{IPS}{Inverse Propensity Scoring}
\acrodef{CH}{Chernoff-Hoeffding}
\acrodef{MPeB}{Maurer \& Pontil empirical Bernstein}
\acrodef{SEA}{Safe Exploration Algorithm}
\acrodef{BSEA}{Boundless Safe Exploration Algorithm}
\acrodef{DBGD}{Dueling Bandit Gradient Descent}

\theoremstyle{definition}

\newcommand{\iteration}{t}
\newcommand{\nriterations}{T}

\newcommand{\contextvector}{\bm{x}}

\newcommand{\contextdimensions}{m}

\newcommand{\action}{a}

\newcommand{\nractions}{n}

\newcommand{\reward}{r}

\newcommand{\propensity}{p}

\newcommand{\lcb}{LCB}
\newcommand{\ucb}{UCB}

\newcommand{\dataset}{\mathcal{D}}

\newcommand{\policy}{\pi}

\newcommand{\labels}{y}

\newcommand{\real}{\mathbb{R}}

\newcommand{\datasetrcv}{RCV1}
\newcommand{\datasetnews}{20 Newsgroups}
\newcommand{\datasetusps}{USPS}

\newcommand{\rqone}{Is \ac{SEA} safe? I.e., does it always perform at least as good as a baseline policy?}
\newcommand{\rqtwo}{Does \ac{SEA} provide a better user experience during training than offline or online learning methods? I.e., does it accumulate a higher reward during training?}
\newcommand{\rqthree}{Does \ac{SEA}, which explores the action space, learn a more effective policy compared to offline learning methods, which do not perform exploration?}

\newcommand\doSingleLine[1]
{\let\normalnewline=\\
	\let\\=\relax
	#1%
	\let\\=\normalnewline}

\begin{document}

\title{Safe Exploration for Optimizing Contextual Bandits}

\author{Rolf Jagerman}
\orcid{0000-0002-5169-495X}
\affiliation{%
	\institution{University of Amsterdam}
	\streetaddress{Science Park 904}
	\postcode{1098 XH} 
	 \city{Amsterdam}
	 \country{The Netherlands}
}
\email{rolf.jagerman@uva.nl}

\author{Ilya Markov}
\affiliation{%
	\institution{University of Amsterdam}
	\streetaddress{Science Park 904}
	\postcode{1098 XH}
	\city{Amsterdam}
	\country{The Netherlands}
}
\email{i.markov@uva.nl}
 
\author{Maarten de Rijke}
\orcid{0000-0002-1086-0202}
\affiliation{
	\institution{University of Amsterdam}
	\streetaddress{Science Park 904}
	\postcode{1098 XH}
	\city{Amsterdam}
	\country{The Netherlands}
}
\email{derijke@uva.nl}


\begin{abstract}
Contextual bandit problems are a natural fit for many information retrieval tasks, such as learning to rank, text classification, recommendation, etc.
However, existing learning methods for contextual bandit problems have one of two drawbacks: they either do not explore the space of all possible document rankings (i.e., actions) and, thus, may miss the optimal ranking, or they present suboptimal rankings to a user and, thus, may harm the user experience.
We introduce a new learning method for contextual bandit problems, \ac{SEA}, which overcomes the above drawbacks.
\ac{SEA} starts by using a baseline (or production) ranking system (i.e., policy), which does not harm the user experience and, thus, is safe to execute, but has suboptimal performance and, thus, needs to be improved.
Then \ac{SEA} uses counterfactual learning to learn a new policy based on the behavior of the baseline policy.
\ac{SEA} also uses high-confidence off-policy evaluation to estimate the performance of the newly learned policy.
Once the performance of the newly learned policy is at least as good as the performance of the baseline policy,
\ac{SEA} starts using the new policy to execute new actions, allowing it to actively explore favorable regions of the action space.
This way, \ac{SEA} never performs worse than the baseline policy and, thus, does not harm the user experience,
while still exploring the action space and, thus, being able to find an optimal policy.
Our experiments using text classification and document retrieval confirm the above by comparing \ac{SEA} (and a boundless variant called \acs{BSEA}) to online and offline learning methods for contextual bandit problems.
\end{abstract}

\begin{CCSXML}
<ccs2012>
<concept>
<concept_id>10002951.10003317.10003338.10003343</concept_id>
<concept_desc>Information systems~Learning to rank</concept_desc>
<concept_significance>500</concept_significance>
</concept>
</ccs2012>
\end{CCSXML}
\ccsdesc[500]{Information systems~Learning to rank}
        
\keywords{Learning to rank, Exploration, Counterfactual learning}

\maketitle
	

\section{Introduction}
\label{section:introduction}

A \emph{multi-armed bandit} problem is a problem in which a limited number of resources must be allocated between alternative choices so as to maximize their expected gain.
In \emph{contextual} bandit problems, when making a choice, a representation of the context is available to inform the decision.
Contextual bandit problems are a natural framework to capture a range of \ac{IR} tasks~\citep{hofmann-contextual-2011,adomavicius2005incorporating,glowacka-2019-bandit}. 
Example tasks to which contextual bandits have been applied include news recommendation~\citep{li2010contextual}, text classification~\cite{swaminathan2015counterfactual}, ad placement~\citep{langford2008exploration}, and online learning to rank~\citep{hofmann-balancing-2011}.
For example, in online \ac{LTR} (see Figure~\ref{fig:rankingasbandits}):
\begin{enumerate}
	\item a user issues a query,
	\item a search engine presents a ranked document list, and
	\item clicks on the documents are recorded as feedback.
\end{enumerate}
The interactive nature of this problem makes it an ideal application for the contextual bandit framework. 
The \ac{LTR} model is a decision-making \emph{policy} $\policy$, the ranked document lists it produces are \emph{actions}, the user is the \emph{context} or \emph{environment}, and clicks are the \emph{reward} signal.

There are two major classes of algorithms to maximize the reward of a contextual bandit policy~$\policy$.
The first are online algorithms, which optimize a policy while it is being executed~\cite{li2010contextual,agrawal2013thompson,kaelbling1994associative,wei2017reinforcement}.
The second are offline algorithms, which optimize a policy based on data that was collected using an existing logging policy, often called $\policy_0$ \cite{strehl2010learning,swaminathan2015counterfactual}.

\begin{figure}
	\includegraphics[width=.6\columnwidth]{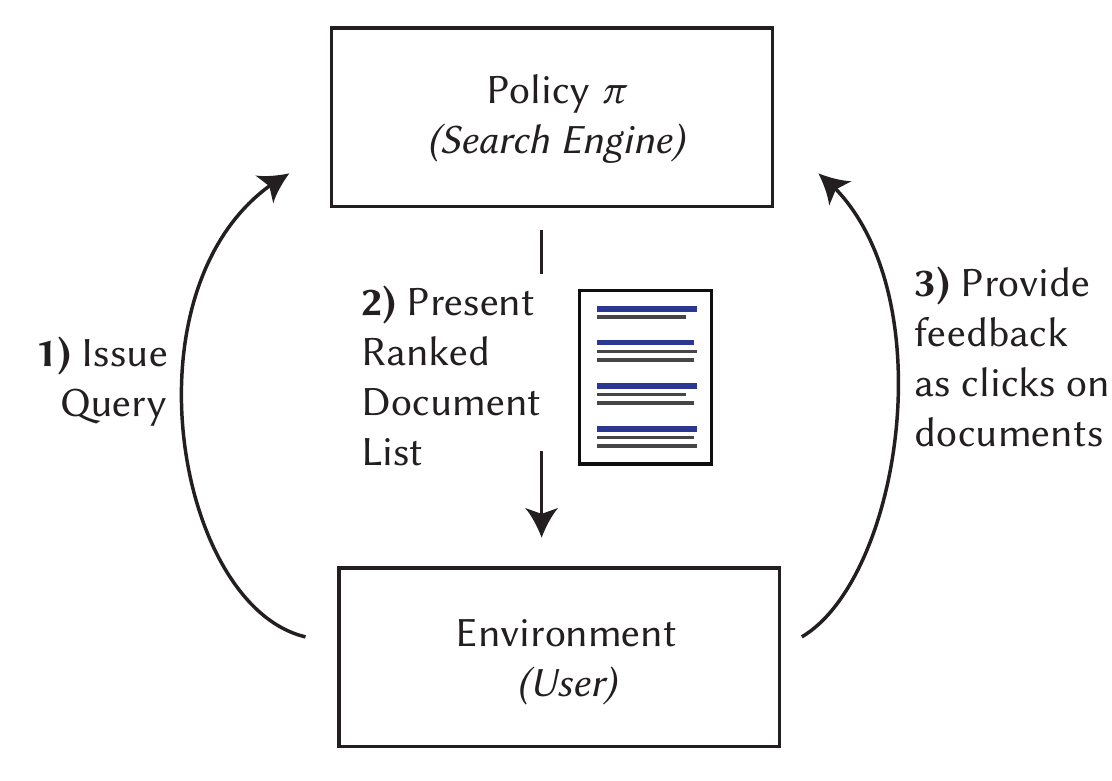}
	\caption{Online learning to rank viewed as a contextual bandit problem. A user issues a query, the search engine responds with a ranked list of documents, and the user provides implicit feedback in the form of clicks on documents.}
	\label{fig:rankingasbandits}
\end{figure}

\textit{Online} learning methods for contextual bandits are widely studied and there are many known algorithms to solve this problem.
E.g., popular algorithms include $\epsilon$-greedy, LinUCB~\cite{li2010contextual} and Thompson Sampling~\cite{agrawal2013thompson}.
Despite their attractive properties, the adoption of online learning methods for contextual bandits in production systems has been limited.
Especially in the early stages of learning, online algorithms may perform actions that are suboptimal and, thus, hurt the user experience.
E.g., in online \ac{LTR} it is risky to present suboptimal rankings of documents~\citep{oosterhuis-balancing-2017,wang-position-2018}.

\emph{Offline} learning from logged feedback~\cite{strehl2010learning,hofmann-reusing-2013,swaminathan2015counterfactual,grotov-bayesian-2015} has been proposed as a solution to this problem. 
Using an existing and already deployed logging policy, one only takes actions from that logging policy to collect bandit feedback.
Using this collected data, a new policy is then learned offline in an unbiased manner. 
This learning process is safe in the sense that the new policy is not executed and, thus, the user experience is not affected.
The drawback of offline learning, however, is that it relies on the deployed logging policy, which never changes by design.
Due to this, there may be areas in the action space left unexplored, i.e., bandit feedback may be missing for those areas.
In \ac{IR} terms, potentially high quality document rankings might never be presented to users
and the corresponding user interactions will never be observed.

In this paper, we propose a method that performs \textit{exploration} of the action space in contextual bandit problems, but does this exploration \textit{safely}, that is, without hurting the user experience.
In our proposed solution, called \acf{SEA}, we use an already deployed logging policy which is safe, static but suboptimal, as a \emph{warm-start} for learning a new policy.
As soon as \ac{SEA} is confident that the performance of the new policy does not fall below that of the logging policy, it starts executing actions from the new policy, thus, exploring the action space.
In the context of \ac{LTR}, this means that \ac{SEA} is capable of exploring and presenting new rankings to users, gathering feedback for rankings that might otherwise have never been presented.
This enables \ac{SEA} to trade off various strengths and weaknesses of both \emph{online} and \emph{offline} learning.

We note that periodic deployment of a policy, after a successfull evaluation on held-out data, is a common practice in industry.
In our experiments we include a boundless version of \ac{SEA}, called \acs{BSEA}, which represents such a deployment pipeline.
What \ac{SEA} adds over and above a standard deployment procedure is:
\begin{inparaenum}
\item it comes with a formal proof of safety, i.e., with high probability the performance of \ac{SEA} will be at least as good as that of a baseline policy (Section~\ref{sec:proof}); and
\item we introduce a computationally efficient manner for computing high-confidence off-policy bounds (Section~\ref{sec:efficient}).
\end{inparaenum}

The research questions we address in this paper are:
\begin{enumerate}[label=(RQ\arabic*),align=left]
	\item \rqone\ 
	\item \rqtwo\ 
	\item \rqthree
\end{enumerate}

Our key technical contributions in this paper are:
\begin{inparaenum}
	\item we introduce \ac{SEA} and show it to be safe: its performance never falls below that of a baseline policy;
	\item we show that SEA improves the user experience: it provides higher cumulative reward during training than both offline and online methods; and
	\item we show that a policy learned with \ac{SEA}, which is capable of exploring new actions, outperforms that of offline methods, which are incapable of exploration.
\end{inparaenum}


\section{Background}

\subsection{Contextual bandits}
\label{sec:contextualbandits}

In online \ac{LTR} we try to optimize the parameters of a ranking model such that it places relevant items at the top of the ranked list. 
We can view the ranking model as a decision-making policy. 
When a new query arrives, the policy takes an action: it displays one possible ranked list to the user. 
The user then decides to click on documents shown in the ranked list, thus providing a reward signal. 
More formally, we consider the following contextual bandit framework. 
At each round $\iteration$:
\begin{enumerate}
	\item The environment announces an $\contextdimensions$-dimensional \emph{context vector}~$\contextvector_\iteration \in \mathcal{X}$. In \ac{IR} terms, this would be a user (environment) issuing a query (context vector).
	\item A policy $\policy$ samples an action $\action_\iteration \in \mathcal{A}$, one of $\nractions$ possible actions,  conditioned on $\contextvector_\iteration$: $a_t \sim \policy(\cdot \mid \bm{x}_t)$. In \ac{IR} terms, this would be a ranking model (policy) producing a ranked list (action).
	\item The environment announces only the reward $r_{t,a_t}$ for the chosen action $a_t$, and not for other possible actions that the policy could have taken. We assume that $r \in [0, 1]$. In \ac{IR} terms, this reward signal could be clicks on documents.
\end{enumerate}
This learning setup is inherently different from supervised learning, where rewards for all possible actions are known. 
We only observe rewards for actions that the policy has taken. 
The setting is referred to as \emph{partial-label problem}~\cite{kakade2008efficient}, \emph{associative bandit problem}~\cite{strehl2006experience} or \emph{associative reinforcement learning}~\cite{kaelbling1994associative}. 
In the context of \ac{LTR}, this means that we do not know the optimal ranked list but can only observe a reward signal for rankings that our policy chooses to display.

There are two major classes of algorithms for learning a contextual bandit policy.
The first are online algorithms that optimize the policy while it is being executed~\cite{li2010contextual,agrawal2013thompson,kaelbling1994associative,wei2017reinforcement}.
Algorithms of this class explore the space of possible actions but may harm the user experience, because suboptimal actions could be executed.
The second are offline algorithms, which optimize a policy based on data that was collected using an existing logging policy~\cite{strehl2010learning,swaminathan2015counterfactual}.
Algorithms of this class are safe as the newly learned policy is not executed and, thus, there is no risk of hurting the user experience.
However, offline algorithms are incapable of exploring the action space, which may harm the performance of the learned model.
In this work we build on the second class of algorithms; see below.

\subsection{Offline~learning~from~logged~bandit feedback}
\label{sec:learning}
Offline learning algorithms collect bandit feedback using an existing logging policy, which we call a \emph{baseline policy} $\policy_b$ (also referred to as $\policy_0$ in the literature~\cite{swaminathan2015counterfactual,swaminathan2016off,joachimsdeep}).
This feedback is collected in the following form, where at time $t$ we have:
\begin{equation}
\dataset_t = \left\{\left(\bm{x}_i, a_i, r_{i,a_i}, p_i\right)\right\}_{i=1}^t,
\end{equation}
where $\bm{x}_i$ is the observed context vector, $a_i$ is the action taken by the baseline policy, $r_{i,a_i} \in [0, 1]$ is the reward given by the environment, and $p_i$ is the propensity score for action $a_i$.
The propensity score is the probability of the baseline policy taking the logged action, that is, $p_i = \policy_b(a_i \mid \bm{x}_i)$~\cite{swaminathan2015counterfactual}.

To learn a new policy $\policy_w$ from the collected bandit feedback, the following maximization problem has to be solved~\cite{joachimsdeep}:
\begin{equation}
\widehat{\policy}_w = \max_{\policy_w} \frac{1}{t} \sum_{i=1}^{t} \frac{r_{i,a_i}}{p_i} \policy_w(a_i \mid \bm{x}_i).
\end{equation}
This optimization problem is solved via \ac{SGD}:
if a new tuple $(\bm{x}_i$, $a_i$, $r_{i,a_i}$, $p_i)$ is observed, we weigh the derivative of $\policy_w(a_i \mid \bm{x}_i)$ by $\frac{r_{i,a_i}}{p_i}$ and update the weights of $\policy_w$ using \ac{SGD}.
We refer to this offline learning method as \ac{IPS}~\cite{joachimsdeep}.

\subsection{High-confidence off-policy evaluation}
\label{sec:offpolicy}
In an offline learning setting, a newly learned policy $\policy_w$ is never executed, so its performance cannot be measured directly.
Instead, we \emph{estimate} its performance using the collected bandit feedback $\dataset_t$ at time $t$. The estimated reward of a policy $\policy$ can be written as~\cite{li2011unbiased}:
\begin{equation}
\hat{R}(\pi, \dataset_t) = \frac{1}{t} \sum_{i=1}^{t} \hat{R}_i = \frac{1}{t} \sum_{i=1}^{t} \frac{r_{i,a_i}}{p_i}\policy(a_i \mid \bm{x}_i) .
\end{equation}
This estimate suffers from high variance, especially when the propensity scores are small.
To resolve this issue, \citet{thomas2015high} propose several high-confidence off-policy estimators.
These estimators first calculate a confidence interval around the policy's performance
and then use the lower bound on this performance for off-policy evaluation.
The high-confi\-dence off-policy estimator that we use is based on the Maurer \& Pontill empirical Bernstein inequality.
The confidence bound can be written as~\cite{thomas2015high}:
\begin{equation}
\mathit{CB}(\pi, \mathcal{D}_t) = \frac{7b \ln\left(\frac{2}{\delta}\right)}{3(t-1)} + \frac{1}{t}\sqrt{\frac{\ln\left(\frac{2}{\delta}\right)}{t-1} \sum_{i,j=1}^t (\hat{R}_i - \hat{R}_j)^2},
\end{equation}
where $(1 - \delta) \in [0, 1]$ is the confidence level and $b$ is an upper bound on $\hat{R}$.
When both $r$ and $p$ are bounded, $b$ can be calculated exactly: $b = \frac{\max r}{\min p}$.
The lower confidence bound on the performance of a policy at time $t$, $LCB(\policy, \dataset_t)$, can be computed as:
\begin{equation}
\mathit{LCB}(\pi, \dataset_t) = \frac{1}{t} \sum_{i=1}^{t} \hat{R}_i - CB(\pi, \mathcal{D}_t).
\label{eq:ch}
\end{equation}
Similarly, we compute the \emph{upper} confidence bound on the performance of a policy as follows:
\begin{equation}
\mathit{UCB}(\pi, \dataset_t) = \frac{1}{t} \sum_{i=1}^{t} \hat{R}_i + CB(\pi, \dataset_t).
\label{eq:chupper}
\end{equation}

\noindent
The estimator described in this section provides an effective way to compute a lower and upper bound on the performance of a policy without executing it. 

In this paper we build on the estimators designed by~\citet{thomas2015high}, but provide the following contributions:
First, we develop \ac{SEA}, an algorithm for automatic safe exploration by deploying new models (Section~\ref{sec:walkthrough});
second, we provide an efficient way to compute the high-confidence bounds (Section~\ref{sec:efficient});
and, third, we formally prove that \ac{SEA} is indeed safe (Section~\ref{sec:proof}).


\section{Safe Exploration Algorithm (SEA)}

We define the notion of \emph{safety} as part of the learning process (Section~\ref{sec:safety}),
walk through the \acf{SEA} (Section~\ref{sec:walkthrough}), 
address the problem of efficient off-policy evaluation (Section~\ref{sec:efficient}), 
formally prove the safety of \ac{SEA} (Section~\ref{sec:proof}), 
and then analyse \ac{SEA} (Section~\ref{sec:analysis}).

\subsection{Safety}
\label{sec:safety}

We use the definition of \emph{safety} from so-called conservative methods~\cite{wu2016conservative,kazerouni2016conservative}.
These methods make a key assumption: there exists a baseline policy $\policy_b$ whose actions can always be executed without risk.
This assumption is very reasonable in practice.
In most industrial settings there exists a production system that we can consider to be the baseline policy.
A learning algorithm is considered \emph{safe} if its performance at any round $\iteration$ is at least as good as the baseline policy.
Thus, safety is a concept that is always measured relative to a baseline.

More formally, let us first consider the notion of regret at round $t$ during training. 
We define the \emph{regret} at time $t$ as the cumulative difference in reward obtained by executing actions from our policy $\policy$ compared to a perfect policy $\policy^*$, one that always chooses the action with maximum reward.
For notational simplicity, we denote the action that a policy chooses in response to a context vector $\bm{x}$ as $\policy(\bm{x})$:
\begin{equation}
\textit{Regret}_t(\pi) = \sum_{i=1}^t (r_{i, \policy^*(\bm{x}_i)} - r_{i,\policy(\bm{x}_i)}).
\end{equation}
A policy $\policy$ is considered \emph{safe} if its regret is always at most as large as that of the safe baseline policy $\policy_b$, i.e., for every $t = 1, \ldots, T$:
\begin{flalign}
\textit{Regret}_t(\pi) &\leq \textit{Regret}_t(\pi_b) \\
\shortintertext{or}
\sum_{i=1}^t (r_{i, \policy^*(\bm{x}_i)} - r_{i,\policy(\bm{x}_i)}) &\leq \sum_{i=1}^t (r_{i, \policy^*(\bm{x}_i)} - r_{i,\policy_b(\bm{x}_i)}) \\
\shortintertext{so}
\frac{1}{t}\sum_{i=1}^t r_{i,\policy(\bm{x}_i)} &\geq \frac{1}{t}\sum_{i=1}^t r_{i,\policy_b(\bm{x}_i)} .
\end{flalign}
In other words, at every time $t$, we want the average reward of our policy $\policy$ to be at least as large as the average reward of the baseline policy $\policy_b$.
In practice, we cannot observe the average reward of a policy without executing it.
This is problematic because we cannot know if a policy is safe until we execute it.
Fortunately, the off-policy estimators described in Section~\ref{sec:offpolicy} provide a way to \emph{estimate} the performance of a policy without executing it.

\subsection{A walkthrough of \ac{SEA}}
\label{sec:walkthrough}

The \acf{SEA} learns a new policy $\policy_w$ \emph{offline} from the output of a baseline policy $\policy_b$ using offline learning techniques described in Section~\ref{sec:learning}.
At each iteration $\iteration$, \ac{SEA} estimates the performance of the newly learned policy $\policy_{w_t}$ and the currently deployed policy $\policy_d$ using the high-confidence off-policy evaluators described in Section~\ref{sec:offpolicy}.
The new policy is only deployed online when its estimated performance is above that of the existing deployed policy.
This allows the newly learned policy to take over and start exploring.
This only happens once \ac{SEA} is confident enough that the new policy's performance will be satisfactory and safe to execute.

\begin{algorithm}
\smallskip
	\caption{Safe Exploration Algorithm (SEA)}
	\begin{algorithmic}[1]
		\State $\policy_b$ \hfill \textit{\small // Baseline policy (current production system) } \label{alg:exploration:policyb}
		\State $\policy_{w_0} \leftarrow \policy_b$ \hfill \textit{\small // Policy to be learned (initialized with baseline weights)}  \label{alg:exploration:policyw}
		\State $\policy_d \leftarrow \policy_b$ \hfill \textit{\small // The deployed policy that is executing actions \label{alg:exploration:policy}}
		\State $\mathcal{D}_0 \leftarrow \{\}$
		\For{$\iteration = 1 \ldots T$}
		\State $\contextvector_\iteration \leftarrow $ contextual feature vector at time $\iteration$ \label{alg:exploration:context}
		
		\State $\action_\iteration \sim \policy_d(\cdot \mid \contextvector_\iteration)$ \label{alg:exploration:draw}
		\State $\propensity_\iteration \leftarrow \policy_d(\action_\iteration \mid \contextvector_\iteration)$ \label{alg:exploration:propensity}
		
		\State Play $\action_\iteration$ and observe reward $r_{t,a_t}$ \label{alg:exploration:play}
		\State $\mathcal{D}_t \leftarrow \mathcal{D}_{t-1} \cup \{(\bm{x}_t, a_t, r_{t,a_t}, p_t)\}$
		\State $w_{t} \leftarrow w_{t - 1} + \eta \frac{r_{t,a_t}}{p_t} \nabla_w \pi_{w_{t-1}}(\action_\iteration \mid \contextvector_\iteration) $ \hfill \textit{\small // Update weights via gradient ascent}  \label{alg:exploration:sgd}
		\State Compute $\lcb(\policy_{w_t}, \mathcal{D}_t)$ using Equation~\ref{eq:ch} \label{alg:exploration:newlcb}
		\State Compute $\ucb(\policy_d, \mathcal{D}_t)$ using Equation~\ref{eq:chupper} \label{alg:exploration:baselinelcb}
		
		\If{$\lcb(\policy_{w_t}, \mathcal{D}_t) \geq \ucb(\policy_d, \mathcal{D}_t)$} \label{alg:exploration:comparison} 
		\State $\policy_d \leftarrow \policy_{w_t}$ \hfill \textit{\small // Deploy new policy only when it is safe to do so}
		\EndIf
		\EndFor
	\end{algorithmic}
	\label{alg:exploration}
\smallskip	
\end{algorithm}

The pseudocode for the \acf{SEA} is provided in Algorithm~\ref{alg:exploration}.
\ac{SEA} starts from a baseline policy $\policy_b$ (Line~\ref{alg:exploration:policyb}) and a new policy that we wish to optimize, $\policy_{w_0}$ (Line~\ref{alg:exploration:policyw}).
At the start, \ac{SEA} deploys the policy $\policy_d$, which is initialized to the baseline policy (Line~\ref{alg:exploration:policy}).
At every iteration $\iteration$, a context vector $\contextvector_\iteration$ is observed (Line~\ref{alg:exploration:context}).
\ac{SEA} draws an action from the deployed policy and computes its propensity score (Lines~\ref{alg:exploration:draw}--\ref{alg:exploration:propensity}).
\ac{SEA} executes the chosen action $\action_\iteration$ and observes the reward $\reward_\iteration$ (Line~\ref{alg:exploration:play}).
The policy $\policy_w$ is then updated via \ac{SGD} (Line~\ref{alg:exploration:sgd}).
We then update the confidence bounds on the estimated performance for the new policy $\policy_{w_t}$ and the deployed policy $\policy_d$ (Lines~\ref{alg:exploration:newlcb}--\ref{alg:exploration:baselinelcb}).
When the estimated lower bound performance of $\policy_{w_t}$ is better than the estimated upper bound performance of $\policy_d$ (Line~\ref{alg:exploration:comparison}), we deploy $\policy_{w_t}$, such that future actions are executed by the newly learned policy instead of the previous deployed policy $\policy_d$.

\subsection{Efficient policy evaluation}
\label{sec:efficient}

To implement \ac{SEA} we use the off-policy estimator described in Section~\ref{sec:offpolicy}. Such an implementation will be computationally expensive, because at each round $\nriterations$ we have to compute a sum over all $t$ data points collected so far. E.g., in Equation~\ref{eq:ch}, we need to compute the mean
\[
\frac{1}{t} \sum_{i=1}^t \hat{R}_i
\]
and the variance 
\[
\frac{1}{t} \sum_{i,j=1}^t \left(\hat{R}_i - \hat{R}_j\right)^2,
\]
at every round $t$.
Therefore, the complexity of applying Equation~\ref{eq:ch} or \ref{eq:chupper} would be $\mathcal{O}\left(T\right)$ and the complexity of Algorithm~\ref{alg:exploration} would be $\mathcal{O}\left(T^2\right)$.

Recall that $\mathcal{D}_t = \{(\mathbf{x}_i, a_i, r_{i,a_i}, p_i)\}_{i=1}^t$ is the collected log data and that there are $|\mathcal{A}|$ possible actions and $|\mathcal{X}|$ possible contexts.
We can then compute the mean $\frac{1}{t}\sum_{i=1}^t \hat{R}_i$ as follows (similar results hold for computing the variance term):
\begin{flalign}
\frac{1}{t} \sum_{i=1}^t \frac{r_{i,a_i}}{p_i} \pi(a_i \mid \bm{x}_i) &= \frac{1}{t} \sum_{a \in \mathcal{A}} \sum_{\bm{x} \in \mathcal{X}} \sum_{i=1}^t \mathbf{1}\left[ a_i = a \land \bm{x}_i = \bm{x} \right] \frac{r_{i,a_i}}{p_i} \pi(a \mid \bm{x}) \\
&=\frac{1}{t} \sum_{a \in \mathcal{A}} \sum_{\bm{x} \in \mathcal{X}} \pi(a \mid \bm{x}) \sum_{i=1}^t \mathbf{1}\left[ a_i = a \land \bm{x}_i = \bm{x} \right] \frac{r_{i,a_i}}{p_i},
\end{flalign}
where $\mathbf{1}[\cdot]$ is the indicator function.

The key insight here is that the inner sum, $\sum_{t=1}^{T} \mathbf{1}\left[ a_t = a \land \bm{x}_t = \bm{x} \right] \frac{r_{t,a_t}}{p_t}$ can be efficiently computed online during data collection and is independent of the policy $\pi_w$ that is being evaluated.
Specifically, we can create a zero-initialized matrix $W \in \mathbb{R}^{|\mathcal{A}| \times |\mathcal{X}|}$ where, each time a new data point $(\bm{x}_i, a_i, r_{i,a_i}, p_i)$ is logged, we update an entry as follows:
$$
W_{a_i,\bm{x}_i} \leftarrow W_{a_i,\bm{x}_i} + \frac{r_{i,a_i}}{p_i}.
$$
This results in the following method for computing the mean:
\begin{equation}
\frac{1}{t} \sum_{a \in \mathcal{A}} \sum_{x \in \mathcal{X}} \pi(a \mid x) W_{a,x} = \frac{1}{t} \sum_{(a, x) : W_{a,x} \neq 0} \pi(a \mid x)W_{a, x},
\end{equation}
which can be orders of magnitude faster to compute when $T \gg |\mathcal{A}| \cdot |\mathcal{X}|$.
This is not unreasonable in practice, as the size of interaction logs is usually much larger than the number of users and items.
In addition, we only need to compute the above sum for $a$ and $x$ for which $W_{a,x} \neq 0$.
This enables the use of sparse data structures that can speed up computation even further.

\subsection{Proof of safety}
\label{sec:proof}

\ac{SEA} is an \emph{online} learning method that we claim to be \emph{safe}.
To show that \ac{SEA} is actually safe, we bound the probability that a suboptimal policy, one whose expected reward is lower than the expected reward of the deployed policy, will be deployed during the duration of the algorithm.

\begin{theorem}
	\label{thm:safe}
	At any time $t$, with probability at least $1 - 2\delta$, \ac{SEA} will not deploy a suboptimal policy.
\end{theorem}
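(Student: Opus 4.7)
The plan is to reduce the claim to a direct application of the Maurer \& Pontil empirical Bernstein guarantee underlying Equations~\ref{eq:ch} and~\ref{eq:chupper}, combined with the explicit deployment rule on Line~\ref{alg:exploration:comparison}. Concretely, that inequality implies that, for a fixed policy $\pi$, with probability at least $1-\delta$ the true expected reward $R(\pi)$ satisfies $R(\pi) \geq \mathit{LCB}(\pi, \mathcal{D}_t)$, and symmetrically with probability at least $1-\delta$ it satisfies $R(\pi) \leq \mathit{UCB}(\pi, \mathcal{D}_t)$. These two one-sided bounds are the key ingredients, and the rest of the argument is essentially chaining them together.

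Fix a round $t$ and consider the two policies that appear in the deployment test: the candidate $\pi_{w_t}$ and the currently deployed $\pi_d$. First I would apply the empirical Bernstein bound to $\pi_{w_t}$ to conclude that, with probability at least $1-\delta$, $\mathit{LCB}(\pi_{w_t}, \mathcal{D}_t) \leq R(\pi_{w_t})$. Second, I would apply it to $\pi_d$ to obtain, with probability at least $1-\delta$, $R(\pi_d) \leq \mathit{UCB}(\pi_d, \mathcal{D}_t)$. A union bound then gives that both inequalities hold simultaneously with probability at least $1-2\delta$.

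On this joint event, suppose SEA decides to deploy $\pi_{w_t}$; by Line~\ref{alg:exploration:comparison} this occurs exactly when $\mathit{LCB}(\pi_{w_t}, \mathcal{D}_t) \geq \mathit{UCB}(\pi_d, \mathcal{D}_t)$. Chaining the three inequalities,
\begin{equation*}
R(\pi_d) \;\leq\; \mathit{UCB}(\pi_d, \mathcal{D}_t) \;\leq\; \mathit{LCB}(\pi_{w_t}, \mathcal{D}_t) \;\leq\; R(\pi_{w_t}),
\end{equation*}
which says precisely that $\pi_{w_t}$ is not suboptimal relative to $\pi_d$. Equivalently, the failure event (``a suboptimal deployment occurs at time $t$'') is contained in the failure event of at least one of the two concentration inequalities, whose total probability is at most $2\delta$. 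This yields the claimed $1-2\delta$ safety guarantee.

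The main conceptual subtlety — and where I would spend the most care — is in asserting that the Maurer \& Pontil inequality may legitimately be invoked here, because $\mathcal{D}_t$ is collected under the \emph{adaptively} changing policy $\pi_d$ rather than a single fixed logging policy, and the candidate $\pi_{w_t}$ is itself data-dependent. I would address this by noting that, conditional on the history up to round $t$, each importance-weighted term $\hat R_i$ is bounded in $[0, b]$ with $b = \max r/\min p$ and has the correct conditional mean, which is what the empirical Bernstein bound actually requires; the two policies $\pi_{w_t}$ and $\pi_d$ at the moment of the test are fixed functions of that history, so the one-sided guarantees still apply. A minor additional remark is that the theorem is stated pointwise in $t$; a uniform-in-$t$ version would require replacing $\delta$ by $\delta/T$ (or a time-uniform bound) in the union step, but is otherwise identical.
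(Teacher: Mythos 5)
Your proposal is correct and follows essentially the same route as the paper's own proof: a union bound over the two confidence-bound failure events (for $\pi_{w_t}$ and $\pi_d$), followed by chaining $R(\pi_d) \leq \mathit{UCB}(\pi_d) \leq \mathit{LCB}(\pi_{w_t}) \leq R(\pi_{w_t})$ on the good event to conclude that any deployed policy is not suboptimal. Your closing remarks on the adaptively collected data, the data-dependence of $\pi_{w_t}$, and the pointwise-in-$t$ versus uniform-in-$t$ distinction flag genuine subtleties that the paper's proof passes over in silence, and are a welcome addition even though they do not change the core argument.
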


\begin{proof}
For notational simplicity we write the expected reward of a policy $\pi$ as:
\begin{equation}
R_t(\pi) = \mathbb{E}_{a\sim\pi}\left[\frac{1}{t} \sum_{i=1}^t r_{i,a}\right].
\end{equation}
First, from Algorithm~\ref{alg:exploration} we know that we only deploy $\pi_{w_t}$ when $\mathit{LCB}(\pi_{w_t}) > \mathit{UCB}(\pi_d)$.
Suppose that our confidence bounds do not fail.
In this case, it is impossible to deploy a suboptimal policy, since
\begin{equation}
R_t(\pi_{w_t})
\geq 
\mathit{LCB}(\pi_{w_t}) > \mathit{UCB}(\pi_d)
\geq
R_t(\pi_d)
\end{equation}
Consequently, a suboptimal policy can only be deployed when either the confidence bound estimate on our newly learned policy $\pi_{w_t}$ or our deployed policy $\pi_{d}$ fails.

The high-confidence off-policy estimators (see Equations~\ref{eq:ch} and~\ref{eq:chupper}) provide a lower and upper bound on the estimated performance.
According to Theorem 1 in \cite{thomas2015high}, these confidence bounds hold with probability at least $1 - \delta$.
Conversely, this means that the confidence bounds may fail with probability at most $\delta$:
\begin{equation}
P\left(R_t(\pi) \notin \left[\mathit{LCB}(\pi), \mathit{UCB}(\pi)\right]\right) \leq \delta,
\end{equation}
and, as a result, the probability that either the confidence bound on $\pi_{w_t}$ or $\pi_{d}$ will fail is at most $2\delta$:
\begin{equation}
P\left(R_t(\pi_{w_t}) \notin \left[\mathit{LCB}(\pi_{w_t}), \mathit{UCB}(\pi_{w_t})\right] \lor R_t(\pi_{d}) \notin \left[\mathit{LCB}(\pi_{d}), \mathit{UCB}(\pi_{d})\right]\right) \leq 2\delta.
\end{equation}
Therefore, with probability at least $1 - 2\delta$, we will not deploy a suboptimal policy.
\end{proof}

In our experiments (see Section~\ref{sec:experiments}) we set $\delta = 0.05$, which means that the algorithm is safe with probability at least $1 - 2\delta = 0.90$ at any time $t$.
We observe that this lower bound is fairly loose because we do not observe a single suboptimal deployment across many repetitions and possible deployment moments.

\subsection{Analysis}
\label{sec:analysis}
Conservative Linear Contextual Bandits (CLUCB)~\cite{kazerouni2016conservative} is, to our knowledge, the only other online learning method for contextual bandits that is also safe.
Unfortunately, CLUCB comes with two limitations:
\begin{enumerate}
	\item CLUCB does not scale beyond toy problems because it constructs confidence sets around parameters, which requires solving a constraint optimization problem every time an action has to be chosen which makes it infeasible for realistic \ac{IR} problems; and
	\item CLUCB can only be applied to linear models.
\end{enumerate}
Our method addresses both limitations.
First, \ac{SEA} scales to large and complex datasets, because it merely needs to compute the lower confidence bound and perform a gradient update step, both of which can be done highly efficiently.
Second, \ac{SEA} can easily be adapted to non-linear models such as gradient-boosted decision trees and neural networks as it is based on gradient descent.

\ac{SEA}, being an online learning method, is capable of exploration.
In contrast, \emph{offline} methods do not explore, they merely observe what a baseline policy is doing.
If this baseline policy does not explore well, offline learning techniques, whilst still able to learn, are less effective~\cite{swaminathan2015counterfactual}.
And even if the baseline policy is highly explorative, what truly matters is how well this policy explores the regions with favorable losses~\cite{owen2013monte}. 
\ac{SEA} solves this problem as it starts exploring actions using the newly learned policy as soon as it is safe to do so.
Since the policy learned by \ac{SEA} has a higher estimated performance than the baseline policy, the actions that it eventually takes are likely to be actions with high reward.

We note that the safety that \ac{SEA} guarantees does come at a cost: \ac{SEA} cannot guarantee that it will explore new actions beyond the initial deployed policy.
In contrast, purely \emph{online} methods can explore without any restrictions and as a result they may end up learning a better policy than \ac{SEA}.
Put differently, \ac{SEA} provides a trade-off between safety and exploration: with a lower $\delta$, \ac{SEA} will be more safe, but gives up some amount of exploration.
Vice versa, a higher $\delta$ allows \ac{SEA} to explore more aggressively while giving up some level of safety.


\section{Experimental Setup}
\label{sec:experiments}

To answer our research questions, we consider two tasks: \emph{text classification} and \emph{document ranking}.
We consider these two complementary tasks as they differ in the size of the action space, which is small in the case of text classification (the number of classes) but large in the case of document ranking (the number of possible ranked lists).
In both cases we turn a supervised learning problem into a bandit problem.

\subsection{Text classification task}

For \emph{text classification} we use a dataset $\mathcal{D} = \{(\contextvector_\iteration, \labels_\iteration)\}_{\iteration=1}^\nriterations$, where $\contextvector_\iteration \in \real^\contextdimensions$ is a feature representation of the object we wish to classify and $\labels_\iteration \in \{0, \ldots, \nractions\}$ is the correct label for that object.
E.g., in text classification $\contextvector_\iteration$ is a bag-of-words representation of a document we wish to classify and $\labels_\iteration$ the correct label for that document.
We are interested in the contextual bandit formulation for multi-class classification, i.e., where the correct label $\labels_\iteration$ is not known, but we can observe a reward signal for a chosen label $\action_\iteration$.

We follow the methodology of~\citet{beygelzimer2009offset} to transform a supervised learning problem into a contextual bandit problem:
\begin{enumerate}
	\item The environment presents the policy with the feature vector $\contextvector_\iteration$ (e.g., the bag-of-words representation of a text document).
	\item The policy chooses $\action_\iteration$, one of $\nractions$ possible labels, as its prediction.
	\item The environment returns a reward of 1 if the chosen label was correct and a reward of 0 if it was incorrect.
\end{enumerate}
Methods that use counterfactual learning (\ac{IPS} and \ac{SEA}), require a stochastic baseline policy and the corresponding propensity scores, i.e., the probability that the baseline policy chose the selected label $\action_\iteration$.
To this end we define our data-collection policy using the $\epsilon$-greedy approach.
An $\epsilon$-greedy approach offers several benefits over alternative exploration strategies:
\begin{enumerate}
\item The amount of exploration can be manually tuned with the $\epsilon$ parameter, allowing either very conservative or very aggressive exploration.
\item It is a priori known, via the $\epsilon$ parameter, how much performance we are giving up to perform exploration.
\item The $\epsilon$-greedy strategy generates stable and bounded propensity scores;
this counteracts the high variance problem that is common in counterfactual learning and evaluation.
\end{enumerate}
We consider two types of reward signals.
First, the perfect scenario where a reward of $1$ is given if the chosen label $a_t$ is correct and 0 otherwise.
Second, a near-random scenario where rewards are sampled from a Bernoulli distribution.
More specifically, when a policy chooses the correct label $a_t$, the policy is given a reward of $1$ with probability $0.6$.
Conversely, if the chosen label $a_t$ is incorrect, the learner is given a reward of $1$ with probability $0.4$.
We choose these contrasting scenarios as they highlight the need for safety.
It is much easier for a learner to make mistakes in the near-random scenario due to the high levels of noise and we hypothesize that safety plays a more important role in that case.
The text classification datasets that we use are specified in Table~\ref{tbl:classificationdatasets}.

\begin{table}
	\caption{Datasets used for the text classification task.}
	\label{tbl:classificationdatasets}
	\begin{tabular}{lcccc}
	\toprule
	\bf Dataset         & \bf Classes &\bf Features	& \bf Train size	& \bf Test size \\ \midrule
	\datasetusps~\cite{hull1994database}         &    10      & \phantom{00,}256       &  \phantom{1}7,291      &  \phantom{00}2,007 \\
	\datasetnews~\cite{lang1995newsweeder}    &    20     &  62,060   &    15,935  &  \phantom{00}3,993 \\
	\datasetrcv~\cite{lewis2004rcv1} & 53 & 47,236 & 15,564 & 518,571 \\
	\bottomrule
	\end{tabular}
\end{table}

\subsection{Document ranking task}
\label{sec:documentranking}
For the \emph{document ranking} task we use the counterfactual \ac{LTR} framework as described in~\cite{joachims2017unbiased}.
In this setup, a production ranker displays a ranked list to the user.
It is assumed that the user does not examine all documents in the presented ranking, but is instead more likely to observe top-ranked documents than lower ranked ones, a phenomenon called ``position bias''~\citep{joachims-2005-accurately}.
After a user observes a document they can either judge it as relevant, by clicking on it, or judge it as non-relevant, by not clicking on it.

We simulate user behavior as follows:
\begin{enumerate}
\item The policy being learned presents a ranked list to the simulated user.
\item The simulated user samples a set of observed documents from the ranked list, where the probability of observing the document at rank $i$ is 
\begin{equation}
p_i = \left( \frac{1}{i} \right)^\eta,
\label{eqn:doc_propensity}
\end{equation}
where $\eta$ is a parameter that controls the severity of click bias. This setup is identical to the one described in~\cite{joachims2017unbiased}.
\item For each observed document, we generate clicks depending on the relevance of the document. We use varying levels of click noise (``perfect'', ``position-biased'' and ``near-random''), the specific click probabilities are listed in Table~\ref{tbl:clicknoise}. This setup is in line with the methodologies described in~\cite{hofmann-balancing-2011,oosterhuis2016probabilistic,oosterhuis-balancing-2017,hofmann2013fidelity}.
\end{enumerate}
In our setup, clicks are simulated only on the top 10 documents. This more realistically simulates the behavior of Web search users who are unlikely to visit the second (or later) result page~\citep{chuklin-modeling-2013}.
For the counterfactual models we assume the correct propensity scores are known during learning.
In other words, the propensity model used to simulate position bias is also used to compute the propensity scores during learning.

We update our model via stochastic gradient descent at the document level by updating the weights using a weighted gradient and stochastic gradient descent, as described in Section~\ref{sec:learning}.
Table~\ref{tbl:rankingdatasets} details the datasets used for the document ranking task.

\begin{table}
	\caption{Datasets used for the document ranking task.}
	\label{tbl:rankingdatasets}
	\begin{tabular}{lccc}
	\toprule
	\bf Dataset         & \bf Queries & \mbox{}\hspace*{-1mm}\bf Features & \bf Avg. docs per query \\ \midrule
	MSLR-WEB10K~\cite{DBLP:journals/corr/QinL13}         &    10,000            &  136 & 124     \\
	Yahoo Webscope~\cite{chapelle2011yahoo}         &    36,251            &  700 & \phantom{0}23     \\
	Istella-s~\cite{lucchese2016post}    &    33,118        &    220 &  103 \\
	\bottomrule
	\end{tabular}
\end{table}

\begin{table}
	\caption{Click noise settings for the document ranking task. Each entry is the probability of clicking a document given its relevance label.}
	\label{tbl:clicknoise}
	\begin{tabular}{lrrrrr}
	\toprule
	 & \multicolumn{5}{c}{$P(\mathit{click} \mid \mathit{relevance})$} \\
	 \cmidrule{2-6}
	 & 0 & 1 & 2 & 3 & 4 \\
	 \midrule
	Perfect & 0.00 & 0.20 & 0.40 & 0.80 & 1.00 \\ 
	Position-biased & 0.10 & 0.10 & 0.10 & 1.00 & 1.00 \\ 
	Near random & 0.40 & 0.45 & 0.50 & 0.55 & 0.60 \\ 
	\bottomrule
	\end{tabular}
\end{table}

\subsection{Methods used for comparison}
Our experiments are aimed at assessing the performance of \ac{SEA} for solving contextual bandit problems. 
We compare \ac{SEA} against offline and online methods. 
The former comparison is motivated by the fact that offline methods are safe by definition so a comparison of \ac{SEA} against such methods will inform us about the potential performance gains by using \ac{SEA}.
The latter is motivated by the fact that \ac{SEA} is an online method so a comparison of \ac{SEA} against such methods will inform us about the safety of using \ac{SEA}.
In all experiments, the models to be trained are warm-started with the weights of the deployed policy.

\emph{Offline} methods for contextual bandit problems optimize a policy by observing actions that are taken by a separate logging policy $\policy_b$.
This makes them safe, because learning happens only on data that has been collected in the past by a logging policy.
However, safety also makes them incapable of exploration.
We use the state-of-the-art offline method $\lambda$-translated Inverse Propensity Scoring ($\lambda$-IPS)~\cite{joachimsdeep}.
We hypothesize that the average reward of a model trained by \ac{SEA} is higher than of a model trained by $\lambda$-IPS because \ac{SEA} is capable of exploring actions from the newly learned policy.

\emph{Online} methods optimize a policy $\policy_w$ by having it interact with the environment. 
Such methods are effective at exploring the action space but have no notion of safety. 
The online methods we use for the classification task are:
\begin{enumerate}
	\item policy gradient with an $\epsilon$-greedy strategy,
	\item policy gradient with a Boltzmann exploration strategy,
	\item LinUCB~\cite{li2010contextual}, and
	\item Thompson Sampling ~\cite{agrawal2013thompson}.
\end{enumerate}
For the ranking task, we use the following online methods:
\begin{enumerate}
\item SVMRank Online~\cite{joachims2002optimizing}
\item Dueling Bandit Gradient Descent~\cite{yue2009interactively} with Team-Draft Interleaving~\cite{radlinski2008does}
\end{enumerate}
We hypothesize that the user experience of online methods suffers in the early stages of learning (i.e., performance will be below the baseline policy $\policy_b$) because these approaches do not provide formal guarantees of safety while exploring new actions.

\emph{Safe} online methods are meant to be safe in the sense that during training, the algorithms never perform worse than a baseline policy $\policy_b$.
Our contribution, \ac{SEA}, falls in this class of methods.
We also consider CLUCB (Conservative Linear UCB)~\cite{kazerouni2016conservative}.
Its design makes it impractical for problems with high dimensionality or large action spaces as it requires performing an $\contextdimensions \times \contextdimensions$ matrix inversion and solving a constrained optimization problem whenever an action has to be selected;
its high computational complexity prevents us from using it with our datasets.

Finally, we also include an empirically safe version of \ac{SEA}, which does not use the high-confidence bounds, which we denote as \acf{BSEA}.
This method compares the mean estimate of the reward of the learned policy $\policy_w$ and the deployed policy $\policy_d$, instead of the lower and upper confidence bounds.
Specifically, we use Algorithm~\ref{alg:exploration} where we set
\begin{equation}
\ucb(\policy, \mathcal{D}_t) = \lcb(\policy, \mathcal{D}_t) = \frac{1}{t} \sum_{i=1}^{t} \frac{r_{i,a_i}}{p_i}\policy(a_i \mid \bm{x}_i).
\end{equation}
We expect this method to deploy its learned model earlier and more frequently than \ac{SEA} because it does not have to overcome potentially large confidence bounds.
Hence, we expect this policy to do better than \ac{SEA}, but it may exhibit unsafe behavior as it no longer provides the same formal safety guarantees as \ac{SEA}.

\subsection{Choice of baseline policy}
Both $\lambda$-IPS and \ac{SEA} depend on a baseline policy $\policy_b$.
We require that this baseline policy is sub-optimal, so that learning can occur.
This requirement is reasonable since we cannot hope to improve an already optimal policy.
We introduce suboptimality in $\policy_b$ by subsampling the training set on which we train the policy (1\% sample for the classification task and 0.1\% sample for the ranking task).
This is motivated by a scenario that commonly occurs in real search engines or classification systems:
Manual labels are expensive to obtain and are usually available on a scale of several orders of magnitude smaller than logged bandit feedback.
This strategy for introducing suboptimality results in a baseline policy whose performance is much better than random, but still not optimal.

\subsection{Metrics and statistical significance}
We evaluate \ac{SEA} and the competing approaches in two ways. 
One is in terms of \emph{cumulative reward} during training, which is the sum of all rewards received as a function of the number of rounds; this type of metric allows us to quantify the degree to which the user experience is affected.
A higher cumulative reward during training indicates a better user experience.
The second is in terms of \emph{average reward}, which is the reward averaged per action on held-out test data; this type of metric allows us to quantify how well a trained policy generalizes to unseen test data.
The document ranking task uses simulated clicks as a reward signal (see Section~\ref{sec:documentranking}) during training, which is not a very insightful metric for evaluating the true performance of a policy.
Instead, to \emph{evaluate} the learned policies, we use nDCG@10~\cite{jarvelin2002cumulated}.
We measure statistical significance of observed differences using a $t$-test ($p < 0.01$).


\section{Results}
\label{section:results}

\subsection{Safety}
\label{sec:results:safety}
We first address \textbf{RQ1}: 
\begin{quote}
\textit{\rqone}
\end{quote}
To answer RQ1, we plot the performance of \ac{SEA} against that of \emph{online} algorithms while they are training, using average reward on a held-out test dataset as our metric.

Let us first consider the text classification task; see Figure~\ref{fig:classification:online}.
The baseline policies have an average accuracy of around 0.6, except in the \datasetnews~dataset where this is 0.4.
The tested online algorithms include $\epsilon$-greedy, Boltzmann exploration, Thompson sampling and LinUCB.
For the \datasetnews~and \datasetrcv~datasets we do not run LinUCB and Thompson sampling because they require inverting a $\numprint{47236} \times \numprint{47236}$ matrix (for \datasetnews) and $\numprint{62060} \times \numprint{62060}$ matrix (for \datasetrcv) on every update which is too computationally expensive.
The performance of \ac{SEA} is at least as good as the baseline policy.
Similarly, it seems that warm-started online algorithms are also safe in this scenario, as they too perform always at least as good as the baseline policy.
The best algorithm in terms of final performance is different for each of the datasets.
Overall we find that when it is computationally feasible to apply UCB or Thompson sampling, they outperform all other methods.
Furthermore we find that Boltzmann exploration works very well with a perfect reward signal.
However, in the case of near random rewards, both Boltzmann exploration and $\epsilon$-greedy are not always capable of learning a good policy.
Finally, we observe that \ac{BSEA} is on par with \ac{SEA}, and in some cases significantly outperforms it, while being empirically safe.
\ac{BSEA} is capable of deploying its learned model faster and more frequently than \ac{SEA} as it does not have to overcome potentially large confidence bounds.
Although this means that \ac{BSEA} is not guaranteed to be safe, we find that it is empirically safe across all experimental conditions.

Next, we consider the document ranking setting; see Figure~\ref{fig:ranking:online}.
We use NDCG@10 on held-out test data as the evaluation metric.
We observe that \ac{SEA} is always at least as good as the baseline policy whereas online learning methods may suffer from unsafe performance in the early stages of learning.
We see that in the case of near-random clicks on the Istella-s dataset, \ac{SEA} accurately identifies that it is not safe to deploy whereas the online method suffers from unsafe performance and actually go below the baseline level.
However, on the other datasets with near-random clicks, \ac{SEA} never deploys a new model and consequently does not explore different actions.
As a result, it is unable to improve upon the production policy.
\emph{Online} approaches outperform \ac{SEA} here because they explore more aggressively and are able to learn even in the face of large amounts of noise.
These results are in line with previous work~\cite{jagerman-2019-model}, which has shown that counterfactual methods have difficulty learning in scenarios with large amounts of noise.
In cases with position-biased or perfect clicks, the ranker trained by SEA performs on par with the online method.
This means that in realistic scenarios we do not sacrifice any performance by using \ac{SEA}.
Finally, we note that \ac{BSEA} performs empirically safe on the document ranking task across all experimental settings, which is in line with the results on the text classification task.

This answers RQ1. Our experiments indicate that \ac{SEA} is indeed safe, its performance does not fall below that of the baseline policy.

\begin{figure*}
	\includegraphics[width=0.67\textwidth]{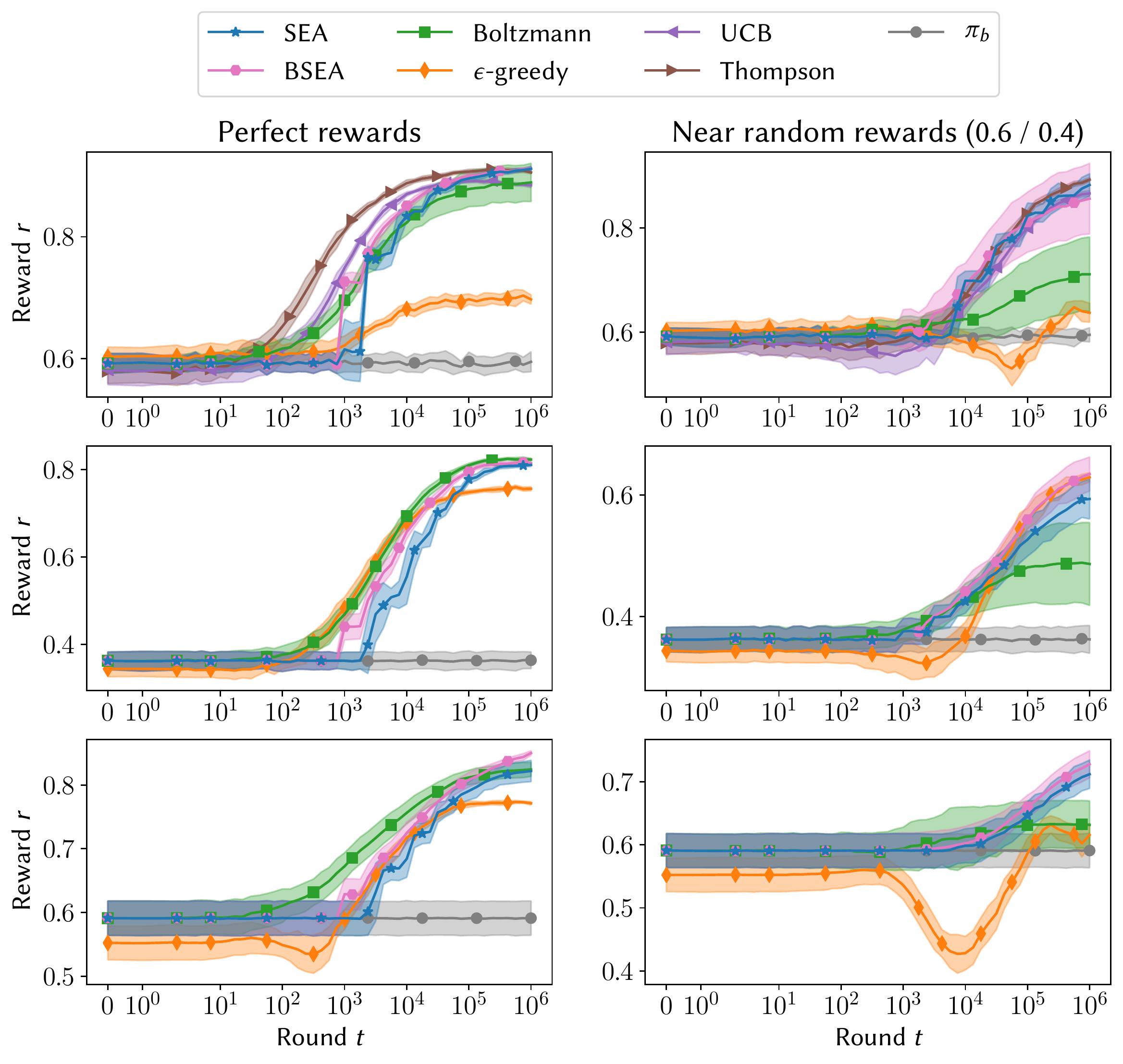}
	
	\caption{The performance of \ac{SEA} (blue line) compared to \emph{online} algorithms for the text classification task. 
		The top row indicates the USPS dataset, the middle row the 20 Newsgroups dataset and the bottom row the RCV1 dataset.
		The performance is measured on a held-out test set after each round $t$. 
		Shaded areas indicate standard deviation. Best viewed in color.}
	\label{fig:classification:online}
\end{figure*}

\begin{figure*}[h]
	\includegraphics[width=\textwidth]{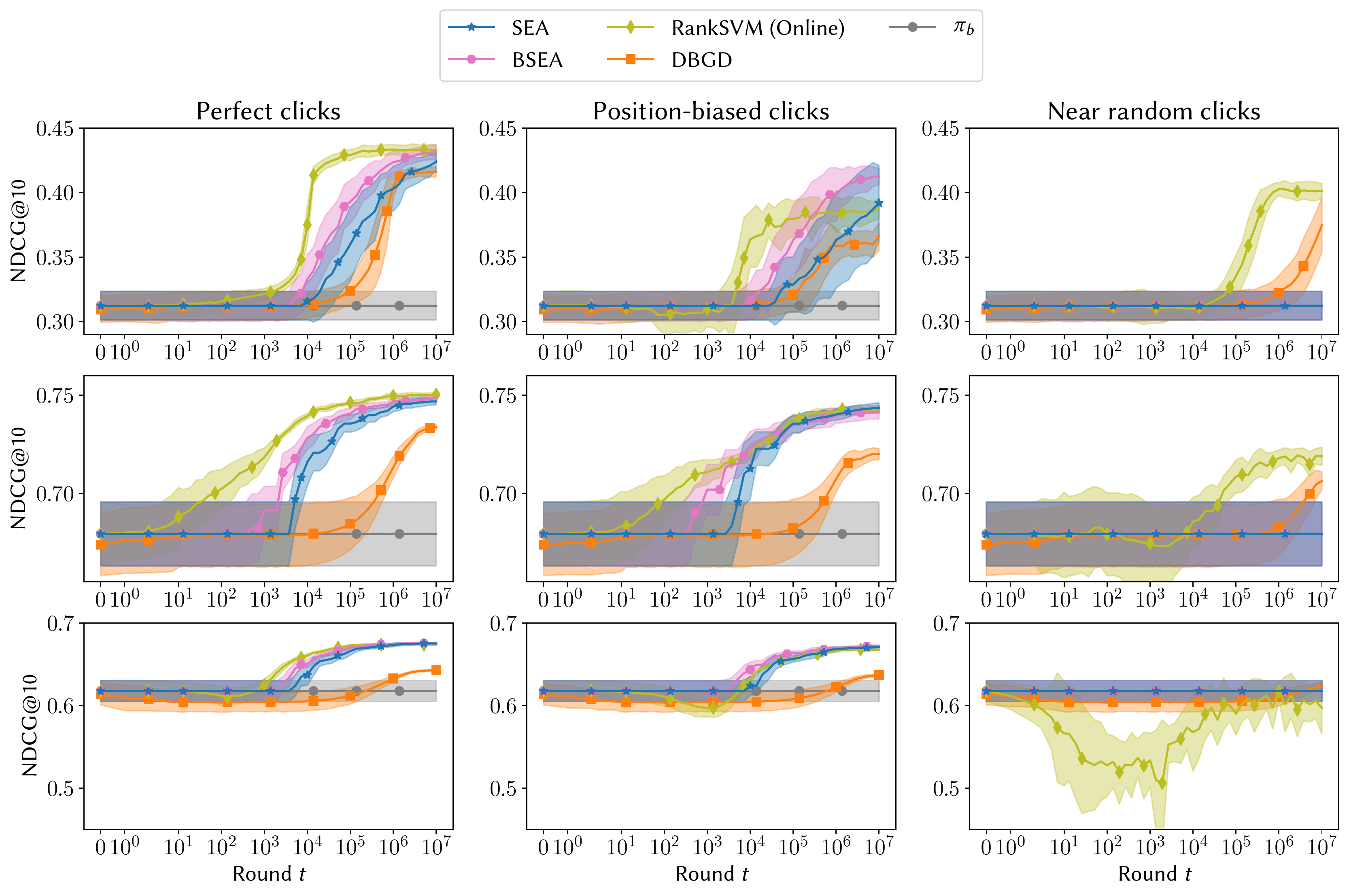}
	
	\caption{The performance of \ac{SEA} (blue line) compared to an \emph{online} approach (orange line) for the document ranking task on MSLR-10k (top row), Yahoo Webscope (middle row) and Istella-s (bottom row) with varying levels of click noise.
	The performance of the trained ranker is measured on a held-out test set after each round $t$. 
	Shaded areas indicate standard deviation. Best viewed in color.}
	\label{fig:ranking:online}
\end{figure*}

\subsection{Improved user experience}
\label{sec:results:userexperience}
Next, we answer \textbf{RQ2}: 
\begin{quote}
\textit{\rqtwo}
\end{quote}
To answer RQ2, we measure the cumulative reward obtained by the learning algorithms.
During training, the offline method, $\lambda$-IPS,  only observes actions taken by the baseline policy, hence its cumulative reward is always equal to the cumulative reward of the baseline and is omitted from the result tables to save space.

Table~\ref{tbl:classification:results} lists the results for the text classification task.
The cumulative reward achieved by \ac{SEA} is at least as good as the baseline, and eventually better.
We find that only $\epsilon$-greedy on the RCV1 dataset under near random rewards performs significantly worse than the baseline.
In all other settings we find that all the warm-started online methods similarly provide a user experience that is at least as good as the baseline, and eventually better.
This means that for the text classification task, all methods, with the exception of $\epsilon$-greedy, perform safely and never harm the user experience.
Finally, we see that \ac{BSEA} is able to improve the user experience faster and more quickly than \ac{SEA} in the perfect rewards setting, and performs on par in the near random rewards setting.
Similar to the results we have observed in Section~\ref{sec:results:safety}, we hypothesize that \ac{BSEA} does not have to overcome potentially large confidence bounds and as a result is able to more quickly and more frequently deploy its learned model.
As a result, \ac{BSEA} is capable of improving the user experience over \ac{SEA}.

Next we turn to the case of ranking. See Table~\ref{tbl:ranking:results} for the cumulative reward results for document ranking.
It is clear that \ac{SEA} performs at least as good as the baseline policy and eventually outperforms it.
For the online learning method, this is not the case.
Specifically for the Istella-s dataset with near random clicks, we observe significant performance degradations in the early stages of learning which harms the user experience.
However, on the other datasets with near random clicks we find that the online methods are capable of safely exploring the action space, even with high noise, and as a result improve the overall user experience more than \ac{SEA}.
From Table~\ref{tbl:dist:rel} we see that Istella-s contains very sparse relevance feedback compared to the MSLR10k and Webscope datasets. In the near-random click scenario this means that there are significantly more clicks on non-relevant documents, making Istella-s with near-random clicks the most challenging learning scenario.
We once again see that \ac{BSEA} is a very strong baseline across all datasets, improving the user experience over \ac{SEA} in the perfect and position-biased clicks settings while performing on par with \ac{SEA} in the near-random click setting.

This answers RQ2.
\ac{SEA} provides a better user experience than online methods, particularly in the document ranking setting with high levels of click noise where the user experience may be negatively affected by online algorithms.
In the later stages of learning \ac{SEA} provides a user experience that is significantly better than the baseline and comparable to online approaches.
We find that \ac{BSEA} is able to improve the user experience even further by being able to deploy its learned model earlier and more frequently.

\begin{table*}
	\caption{Cumulative reward, relative to the baseline policy, while training for text classification. LinUCB and Thompson Sampling cannot be run on the 20 Newsgroups and RCV1 datasets due to their high computational complexity. 
	Statistically significant differences with \ac{SEA} are indicated using $^\blacktriangle$ ($p < 0.01$) for gains and  $^\blacktriangledown$ ($p < 0.01$) for losses.}
	\label{tbl:classification:results}

\begin{tabular}{ll@{\hspace{1cm}}rrrrr}
  \toprule
    & \bf Round $t$         & $10^{2}$ & $10^{3}$ & $10^{4}$ & $10^{5}$ & $10^{6}$ \\
  \midrule
    \multicolumn{7}{c}{\emph{Perfect rewards}}\\
  \midrule
    \multirow{6}{*}{\rotatebox[origin=c]{90}{USPS}}
      & UCB               & 1.73  & 105.53\rlap{$^\blacktriangle$}  & 2568.33\rlap{$^\blacktriangle$}  & 32496.67\rlap{$^\blacktriangle$}  & 350226.27\rlap{$^\blacktriangle$} \\
      & Thompson          & 4.47  & 172.40\rlap{$^\blacktriangle$}  & 2833.00\rlap{$^\blacktriangle$}  & 33298.33\rlap{$^\blacktriangle$}  & 352940.07\rlap{$^\blacktriangle$} \\
      & $\epsilon$-greedy & 0.80  & 31.87  & 699.07\rlap{$^\blacktriangledown$}  & 10568.53\rlap{$^\blacktriangledown$}  & 116854.20\rlap{$^\blacktriangledown$} \\
      & Boltzmann         & 0.93  & 69.27\rlap{$^\blacktriangle$}  & 1950.47  & 28002.60  & 313452.47 \\
      & {BSEA             } & {0.00}  & {0.00}  & {2086.93\rlap{$^\blacktriangle$}}  & {30483.40\rlap{$^\blacktriangle$}}  & {341919.60\rlap{$^\blacktriangle$}} \\
      & SEA               & 0.00  & 0.00  & 1720.73  & 29286.93  & 337577.53 \\[2ex]
    \multirow{4}{*}{\rotatebox[origin=c]{90}{20-News}}
      & $\epsilon$-greedy & -2.00  & 60.07\rlap{$^\blacktriangle$}  & 2571.53\rlap{$^\blacktriangle$}  & 42913.73\rlap{$^\blacktriangle$}  & 499605.53\rlap{$^\blacktriangledown$} \\
      & Boltzmann         & 0.33  & 56.47\rlap{$^\blacktriangle$}  & 2388.53\rlap{$^\blacktriangle$}  & 45220.73\rlap{$^\blacktriangle$}  & 564285.20\rlap{$^\blacktriangle$} \\
      & {BSEA             } & {0.00}  & {0.00}  & {1935.87\rlap{$^\blacktriangle$}}  & {43151.33\rlap{$^\blacktriangle$}}  & {564200.73\rlap{$^\blacktriangle$}} \\
      & SEA               & 0.00  & 0.00  & 1083.47  & 37342.53  & 540070.47 \\[2ex]
    \multirow{4}{*}{\rotatebox[origin=c]{90}{RCV1}}
      & $\epsilon$-greedy & -3.53  & -28.33  & 905.40\rlap{$^\blacktriangle$}  & 21396.73\rlap{$^\blacktriangle$}  & 283490.73 \\
      & Boltzmann         & 0.80  & 51.40  & 1356.07\rlap{$^\blacktriangle$}  & 22085.53\rlap{$^\blacktriangle$}  & 269428.07 \\
      & {BSEA             } & {0.00}  & {0.00}  & {946.67}  & {22094.07\rlap{$^\blacktriangle$}}  & {322481.00\rlap{$^\blacktriangle$}} \\
      & SEA               & 0.00  & 0.00  & 573.40  & 18856.20  & 282126.73 \\
  \midrule
    \multicolumn{7}{c}{\emph{{Near random rewards (0.6 / 0.4)}}}\\
  \midrule
    \multirow{6}{*}{\rotatebox[origin=c]{90}{{USPS}}}
      & {UCB              } & {-1.73}  & {3.93}  & {89.00}  & {3617.53}  & {59778.47} \\
      & {Thompson         } & {-1.73}  & {-3.47}  & {121.67}  & {4024.93}  & {61446.93} \\
      & {$\epsilon$-greedy} & {-0.60}  & {9.73}  & {52.87}  & {-534.00\rlap{$^\blacktriangledown$}}  & {7380.40\rlap{$^\blacktriangledown$}} \\
      & {Boltzmann        } & {0.33}  & {3.20}  & {52.00}  & {1297.73\rlap{$^\blacktriangledown$}}  & {23223.27\rlap{$^\blacktriangledown$}} \\
      & {BSEA             } & {0.00}  & {0.00}  & {87.53}  & {3687.80}  & {51662.27} \\
      & {SEA              } & {0.00}  & {0.00}  & {50.40}  & {3567.00}  & {55458.73} \\[2ex]
    \multirow{4}{*}{\rotatebox[origin=c]{90}{{20-News}}}
      & {$\epsilon$-greedy} & {-1.87}  & {-11.33}  & {-20.67}  & {2440.47}  & {56843.47} \\
      & {Boltzmann        } & {-2.40}  & {-2.67}  & {95.40}  & {2183.27}  & {32669.13} \\
      & {BSEA             } & {-0.33}  & {1.60}  & {108.67}  & {3015.47}  & {58141.20} \\
      & {SEA              } & {-0.33}  & {1.60}  & {82.53}  & {2534.20}  & {48437.53} \\[2ex]
    \multirow{4}{*}{\rotatebox[origin=c]{90}{{RCV1}}}
      & {$\epsilon$-greedy} & {-0.47}  & {-9.40}  & {-253.80\rlap{$^\blacktriangledown$}}  & {-1362.07\rlap{$^\blacktriangledown$}}  & {5222.00\rlap{$^\blacktriangledown$}} \\
      & {Boltzmann        } & {-0.33}  & {0.80}  & {27.53}  & {807.40}  & {8887.93\rlap{$^\blacktriangledown$}} \\
      & {BSEA             } & {0.00}  & {0.47}  & {0.93}  & {905.87}  & {25312.67} \\
      & {SEA              } & {0.00}  & {0.00}  & {7.60}  & {706.60}  & {21776.53} \\
  \bottomrule
\end{tabular}

\end{table*}

\begin{table*}
	\caption{Cumulative NDCG@10, relative to the baseline policy, while training for document ranking under varying levels of click noise. Statistical significance is denoted in the same way as in Table~\ref{tbl:classification:results}.}
	\label{tbl:ranking:results}

\begin{tabular}{ll@{\hspace{1cm}}rrrrrr}
  \toprule
    & \bf Round $t$         & $10^{2}$ & $10^{3}$ & $10^{4}$ & $10^{5}$ & $10^{6}$ & $10^{7}$ \\
  \midrule
    \multicolumn{8}{c}{\emph{Perfect clicks}}\\
  \midrule
    \multirow{4}{*}{\rotatebox[origin=c]{90}{MSLR10k}}
     & RankSVM (Online)  & 0.43  & 6.48  & 259.78\rlap{$^\blacktriangle$}  & 10065.58\rlap{$^\blacktriangle$}  & 115823.32\rlap{$^\blacktriangle$}  & 1180033.12\rlap{$^\blacktriangle$} \\
     & {DBGD             } & {-0.05}  & {-0.62}  & {-1.22}  & {541.43}  & {48244.85\rlap{$^\blacktriangle$}}  & {940724.90\rlap{$^\blacktriangle$}} \\
     & {BSEA             } & {0.00}  & {0.00}  & {49.80}  & {5513.30\rlap{$^\blacktriangle$}}  & {95560.77\rlap{$^\blacktriangle$}}  & {1125341.57\rlap{$^\blacktriangle$}} \\
     & SEA               & 0.00  & 0.00  & 6.52  & 2724.04  & 71507.83\rlap{$^\blacktriangle$}  & 996331.68\rlap{$^\blacktriangle$} \\[2ex]
    \multirow{4}{*}{\rotatebox[origin=c]{90}{Webscope}}
     & RankSVM (Online)  & 1.81  & 31.75\rlap{$^\blacktriangle$}  & 514.23\rlap{$^\blacktriangle$}  & 6477.67\rlap{$^\blacktriangle$}  & 70834.57\rlap{$^\blacktriangle$}  & 730703.57\rlap{$^\blacktriangle$} \\
     & {DBGD             } & {-0.14}  & {-1.22}  & {-4.49}  & {222.10}  & {19343.31}  & {475561.27\rlap{$^\blacktriangle$}} \\
     & {BSEA             } & {-0.01}  & {2.98}  & {325.39\rlap{$^\blacktriangle$}}  & {5518.60\rlap{$^\blacktriangle$}}  & {64913.61\rlap{$^\blacktriangle$}}  & {687719.69\rlap{$^\blacktriangle$}} \\
     & SEA               & -0.01  & -0.05  & 151.24  & 4661.96\rlap{$^\blacktriangle$}  & 60199.38\rlap{$^\blacktriangle$}  & 670419.85\rlap{$^\blacktriangle$} \\[2ex]
    \multirow{4}{*}{\rotatebox[origin=c]{90}{Istella-s}}
     & RankSVM (Online)  & 0.05  & 0.96  & 325.50\rlap{$^\blacktriangle$}  & 5364.95\rlap{$^\blacktriangle$}  & 61424.60\rlap{$^\blacktriangle$}  & 631101.11\rlap{$^\blacktriangle$} \\
     & {DBGD             } & {-1.27}  & {-13.16}  & {-130.50}  & {-911.40}  & {6156.68}  & {222276.99\rlap{$^\blacktriangle$}} \\
     & {BSEA             } & {0.00}  & {0.00}  & {206.87}  & {4749.66\rlap{$^\blacktriangle$}}  & {59609.20\rlap{$^\blacktriangle$}}  & {634840.28\rlap{$^\blacktriangle$}} \\
     & SEA               & 0.00  & 0.00  & 76.25  & 3982.61\rlap{$^\blacktriangle$}  & 56417.23\rlap{$^\blacktriangle$}  & 624784.49\rlap{$^\blacktriangle$} \\
  \midrule
    \multicolumn{8}{c}{\emph{Position-biased clicks}}\\
  \midrule
    \multirow{4}{*}{\rotatebox[origin=c]{90}{MSLR10k}}
     & RankSVM (Online)  & -0.23  & -5.17  & 184.43  & 5845.83\rlap{$^\blacktriangle$}  & 68623.01\rlap{$^\blacktriangle$}  & 725200.78\rlap{$^\blacktriangle$} \\
     & {DBGD             } & {-0.06}  & {-0.81}  & {-3.12}  & {368.26}  & {30558.36\rlap{$^\blacktriangle$}}  & {453895.21\rlap{$^\blacktriangle$}} \\
     & {BSEA             } & {0.00}  & {0.00}  & {3.85}  & {2849.70}  & {69915.76\rlap{$^\blacktriangle$}}  & {927569.46\rlap{$^\blacktriangle$}} \\
     & SEA               & 0.00  & 0.00  & -0.01  & 856.81  & 32375.36  & 634675.69\rlap{$^\blacktriangle$} \\[2ex]
    \multirow{4}{*}{\rotatebox[origin=c]{90}{Webscope}}
     & RankSVM (Online)  & 1.04  & 26.88  & 349.04\rlap{$^\blacktriangle$}  & 4960.61\rlap{$^\blacktriangle$}  & 61055.11\rlap{$^\blacktriangle$}  & 642520.10\rlap{$^\blacktriangle$} \\
     & {DBGD             } & {-0.15}  & {-1.15}  & {-6.12}  & {96.66}  & {14787.60}  & {353452.62\rlap{$^\blacktriangle$}} \\
     & {BSEA             } & {0.00}  & {8.86}  & {324.38\rlap{$^\blacktriangle$}}  & {4920.24\rlap{$^\blacktriangle$}}  & {57553.64\rlap{$^\blacktriangle$}}  & {614338.66\rlap{$^\blacktriangle$}} \\
     & SEA               & 0.00  & 0.01  & 144.06  & 4497.80\rlap{$^\blacktriangle$}  & 57702.23\rlap{$^\blacktriangle$}  & 633904.88\rlap{$^\blacktriangle$} \\[2ex]
    \multirow{4}{*}{\rotatebox[origin=c]{90}{Istella-s}}
     & RankSVM (Online)  & -0.08  & -13.49  & -14.85  & 3303.76\rlap{$^\blacktriangle$}  & 50451.05\rlap{$^\blacktriangle$}  & 561270.85\rlap{$^\blacktriangle$} \\
     & {DBGD             } & {-1.27}  & {-13.14}  & {-133.17}  & {-1132.55}  & {-1749.76}  & {150641.69} \\
     & {BSEA             } & {0.00}  & {0.00}  & {128.66}  & {4145.78\rlap{$^\blacktriangle$}}  & {54614.90\rlap{$^\blacktriangle$}}  & {603275.09\rlap{$^\blacktriangle$}} \\
     & SEA               & 0.00  & 0.00  & 0.00  & 3165.49\rlap{$^\blacktriangle$}  & 50018.04\rlap{$^\blacktriangle$}  & 581799.13\rlap{$^\blacktriangle$} \\
  \midrule
    \multicolumn{8}{c}{\emph{Near random clicks}}\\
  \midrule
    \multirow{4}{*}{\rotatebox[origin=c]{90}{MSLR10k}}
     & RankSVM (Online)  & 0.11  & -0.71  & -24.48  & 850.82  & 66464.76\rlap{$^\blacktriangle$}  & 849713.67\rlap{$^\blacktriangle$} \\
     & {DBGD             } & {-0.06}  & {-0.78}  & {-7.48}  & {-6.20}  & {4004.35}  & {339609.56\rlap{$^\blacktriangle$}} \\
     & {BSEA             } & {0.00}  & {-0.00}  & {-0.01}  & {-0.06}  & {-0.09}  & {0.84} \\
     & SEA               & 0.00  & -0.00  & -0.01  & -0.06  & -0.09  & 0.84 \\[2ex]
    \multirow{4}{*}{\rotatebox[origin=c]{90}{Webscope}}
     & RankSVM (Online)  & 0.23  & -2.44  & 1.81  & 1829.45  & 34646.76\rlap{$^\blacktriangle$}  & 403139.88\rlap{$^\blacktriangle$} \\
     & {DBGD             } & {-0.14}  & {-1.24}  & {-7.46}  & {-50.84}  & {1100.02}  & {169029.43} \\
     & {BSEA             } & {0.01}  & {0.03}  & {0.03}  & {-0.12}  & {0.83}  & {1.18} \\
     & SEA               & 0.01  & 0.03  & 0.03  & -0.12  & 0.83  & 1.18 \\[2ex]
    \multirow{4}{*}{\rotatebox[origin=c]{90}{Istella-s}}
     & RankSVM (Online)  & -7.09  & -83.71\rlap{$^\blacktriangledown$}  & -678.03\rlap{$^\blacktriangledown$}  & -3210.06\rlap{$^\blacktriangledown$}  & -17794.95  & -138288.71 \\
     & {DBGD             } & {-1.27}  & {-13.07}  & {-135.09}  & {-1284.11}  & {-10421.04}  & {3837.78} \\
     & {BSEA             } & {0.00}  & {0.00}  & {0.00}  & {0.00}  & {0.00}  & {0.00} \\
     & SEA               & 0.00  & 0.00  & 0.00  & 0.00  & 0.00  & 0.00 \\
\bottomrule
\end{tabular}

\end{table*}

\begin{table*}
	\caption{Distribution of relevance grades on documents for the ranking datasets. Note that Istella-s is very sparse, nearly 90\% of the documents are judged as non-relevant.}
	\label{tbl:dist:rel}

\begin{tabular}{l@{\hspace{1cm}}rrrrr}
  \toprule
   {Relevance grade} & {0} & {1} & {2} & {3} & {4} \\
  \midrule
    {Webscope} & {0.26} & {0.36} & {0.28} & {0.08} & {0.02} \\
    {MSLR10k} & {0.52} & {0.32} & {0.13} & {0.02} & {0.01} \\
    {Istella-s} & {0.89} & {0.02} & {0.04} & {0.03} & {0.02} \\
  \bottomrule
\end{tabular}

\end{table*}

\subsection{The benefit of exploring}
\label{sec:results:exploration}
Finally, we turn to \textbf{RQ3}: 
\begin{quote}
\textit{\rqthree}
\end{quote}
To answer RQ3, we compare \ac{SEA} to the state-of-the-art offline learning algorithm $\lambda$-IPS on unseen test data.
The results for the classification task are displayed in Table~\ref{tbl:final:classification}.
\begin{table*}
	\caption{Average reward on held-out test data for the learned model for the document classification task after 1,000,000 rounds. Statistical significance is denoted the same way as in Table~\ref{tbl:classification:results}.}
	\label{tbl:final:classification}

\begin{tabular}{ll@{\hspace{1cm}}rr}
  \toprule
     & 
 & \emph{Perfect rewards  } & \emph{{Near random rewards (0.6 / 0.4)}}\\
  \midrule
    \multirow{7}{*}{\rotatebox[origin=c]{90}{USPS}}
      & $\epsilon$-greedy & 0.90\rlap{$^\blacktriangledown$}  & {0.82\rlap{$^\blacktriangledown$}} \\
      & Boltzmann & 0.90  & {0.72\rlap{$^\blacktriangledown$}} \\
      & UCB & 0.91\rlap{$^\blacktriangledown$}  & {0.91} \\
      & Thompson & 0.91\rlap{$^\blacktriangledown$}  & {0.91} \\
      & IPS & 0.92  & {0.89} \\
      & {BSEA} & {0.92}  & {0.86} \\
      & SEA & 0.92  & {0.90} \\[2ex]
    \multirow{5}{*}{\rotatebox[origin=c]{90}{20-News}}
      & $\epsilon$-greedy & 0.83  & {0.69} \\
      & Boltzmann & 0.88\rlap{$^\blacktriangle$}  & {0.53\rlap{$^\blacktriangledown$}} \\
      & IPS & 0.83\rlap{$^\blacktriangledown$}  & {0.53\rlap{$^\blacktriangledown$}} \\
      & {BSEA} & {0.86}  & {0.74} \\
      & SEA & 0.85  & {0.70} \\[2ex]
    \multirow{5}{*}{\rotatebox[origin=c]{90}{RCV1}}
      & $\epsilon$-greedy & 0.85  & {0.68} \\
      & Boltzmann & 0.83  & {0.63\rlap{$^\blacktriangledown$}} \\
      & IPS & 0.81  & {0.68} \\
      & {BSEA} & {0.87\rlap{$^\blacktriangle$}}  & {0.74} \\
      & SEA & 0.84  & {0.73} \\
  \bottomrule
\end{tabular}

\end{table*}
Because \ac{SEA} is capable of exploring, it finds highly favorable regions of the action space.
$\lambda$-IPS is, by design, incapable of exploration and cannot deviate far from the baseline policy in terms of performance.
As a result, in the case of text classification, the final model learned by \ac{SEA} outperforms the final model learned by $\lambda$-IPS on the 20 Newsgroups datasets.
On the USPS and RCV1 dataset, there is no noticeable performance difference between $\lambda$-IPS and \ac{SEA}.
Note that the baseline policy for USPS and RCV1 has a performance of around 0.6, whereas the baseline policy for the 20 Newsgroups dataset only has a performance of around 0.4.
We postulate that this difference in baseline performance may cause $\lambda$-IPS to learn better and therefore perform equally to \ac{SEA} on the USPS and RCV1 datasets.
Lastly, we find that \ac{BSEA} performs on par with \ac{SEA} in all settings, while producing a significantly better model on the RCV1 dataset with perfect rewards.
Similar to our observations in Sections~\ref{sec:results:safety} and~\ref{sec:results:userexperience}, we find that \ac{BSEA} likely performs so well because it does not have to overcome confidence bounds and can deploy its learned model faster and more frequently than \ac{SEA}, allowing it to learn a more effective model.

Finally, we consider the document ranking setting; see Table~\ref{tbl:final:ranking}.
\ac{SEA} is able to learn a more effective ranker than $\lambda$-IPS on all datasets with perfect clicks and on Istella-s with position-biased clicks.
We hypothesize that this is because \ac{SEA}, being capable of exploration, eventually shows documents to the user that the baseline policy would rarely, if ever, show.
As a result, \ac{SEA} is able to obtain clicks on these documents, allowing it to learn more effectively.
This is in line with our expectations because previous work has shown that even a tiny amount of exploration can result in substantial improvements in \ac{LTR}~\cite{hofmann-contextual-2011}.
Finally, we find that \ac{BSEA} does not produce a better ranker than \ac{SEA} for the document ranking task. Across all settings, the models produced by \ac{BSEA} and \ac{SEA} are comparable.

\begin{table*}
	\caption{Average reward on held-out test data for the learned model for the document ranking task after 10,000,000 rounds. Statistical significance is denoted the same way as in Table~\ref{tbl:classification:results}.}
	\label{tbl:final:ranking}

\begin{tabular}{ll@{\hspace{1cm}}rrr}
  \toprule
    & 
 & \emph{Perfect clicks   } & \emph{Position-biased clicks} & \emph{Near random clicks}\\
  \midrule
    \multirow{5}{*}{\rotatebox[origin=c]{90}{MSLR10k}}
      & RankSVM (Online) & 0.43  & 0.39  & 0.40\rlap{$^\blacktriangle$} \\
      & {DBGD} & {0.43}  & {0.38}  & {0.39\rlap{$^\blacktriangle$}} \\
      & IPS & 0.35\rlap{$^\blacktriangledown$}  & 0.35\rlap{$^\blacktriangledown$}  & 0.32 \\
      & {BSEA} & {0.43}  & {0.41}  & {0.32} \\
      & SEA & 0.43  & 0.40  & 0.32 \\[2ex]
    \multirow{5}{*}{\rotatebox[origin=c]{90}{Webscope}}
      & RankSVM (Online) & 0.75  & 0.74  & 0.72\rlap{$^\blacktriangle$} \\
      & {DBGD} & {0.74\rlap{$^\blacktriangledown$}}  & {0.72\rlap{$^\blacktriangledown$}}  & {0.71} \\
      & IPS & 0.73\rlap{$^\blacktriangledown$}  & 0.74\rlap{$^\blacktriangledown$}  & 0.69 \\
      & {BSEA} & {0.75}  & {0.74}  & {0.69} \\
      & SEA & 0.75  & 0.74  & 0.69 \\[2ex]
    \multirow{5}{*}{\rotatebox[origin=c]{90}{Istella-s}}
      & RankSVM (Online) & 0.67  & 0.67\rlap{$^\blacktriangledown$}  & 0.60 \\
      & {DBGD} & {0.66\rlap{$^\blacktriangledown$}}  & {0.65\rlap{$^\blacktriangledown$}}  & {0.64} \\
      & IPS & 0.67\rlap{$^\blacktriangledown$}  & 0.66\rlap{$^\blacktriangledown$}  & 0.64 \\
      & {BSEA} & {0.68}  & {0.67}  & {0.64} \\
      & SEA & 0.68  & 0.67  & 0.64 \\
  \bottomrule
\end{tabular}

\end{table*}

This answers RQ3.
Exploration does indeed help the performance of the policy learned by \ac{SEA} and it outperforms $\lambda$-IPS both for scenarios with a small action space (e.g., text classification) and for scenarios with a large action space (e.g., document ranking).


\section{Related work}

The idea of deploying automated decision making systems in \ac{IR} is not new. 
The contextual bandit framework has been used in news recommendation~\citep{li2010contextual}, ad placement~\citep{langford2008exploration}, and online learning to rank~\citep{hofmann-balancing-2011}. 
Contextual bandit formulations of \ac{IR} problems allow insights and methods from the bandit literature to be applied and extended to address these problems~\citep{hofmann-contextual-2011}.
For research on contextual bandits this connection has opened up an application area where new
approaches can be evaluated on large-scale datasets~\citep{li2011unbiased}.

\subsection{Learning in contextual bandits}
Contextual bandit algorithms have been widely studied in the \emph{online} and \emph{offline} learning settings~\cite{agrawal2013thompson,li2010contextual,swaminathan2015counterfactual}.
A key challenge in contextual bandit problems is the exploration-vs-exploitation tradeoff.
On the one hand we want to explore new actions so as to find favorable rewards.
On the other hand, we want to exploit existing knowledge about actions so as to maximize the total reward.
The online learning methods we consider in this paper deal with this tradeoff in various ways.
The methods we consider are policy gradient with $\epsilon$-greedy exploration, policy gradient with Boltzmann exploration, LinUCB and Thompson Sampling.
Policy gradient methods optimize the weights of a policy via stochastic gradient descent, by solving the following optimization problem:
\begin{equation}
\min_{\policy_w} - \sum_{t=1}^{T} \log \left( \policy_w(\action_\iteration \mid \contextvector_\iteration) \right) \cdot \reward_\iteration.
\end{equation}
The $\epsilon$-greedy heuristic selects actions by choosing with probability $\epsilon$ an action uniformly at random and with probability $(1 - \epsilon)$ the best possible action.
Boltzmann exploration~\cite{cesa2017boltzmann} chooses actions by drawing them with a probability proportional to the policy: $\action_\iteration \sim \policy_w(~\cdot \mid \contextvector_\iteration)$.
LinUCB~\cite{li2010contextual} and Thompson Sampling~\cite{agrawal2013thompson} are different from policy gradient methods because they make a linearity assumption.
These methods construct a set of weights $w_a \in \real^m$ for every possible action, such that $\policy_w(\action_\iteration \mid \contextvector_\iteration) = w_{\action_\iteration}^T \contextvector_\iteration$.
LinUCB selects actions by choosing the one with the highest confidence bound.
Thompson Sampling samples new weights $\hat{w}_{\action}$ from a posterior distribution and then chooses the best action given the sampled weights: $\action_\iteration = \arg\max_a \pi_{\hat{w}_{\action}}(\action \mid \contextvector_\iteration)$.

In \ac{IR} there has been considerable attention for exploiting log data~\citep{hofmann-reusing-2013}. 
It is one of the most ubiquitous forms of data available, as it can be recorded from a variety of systems  at little cost~\citep{swaminathan-batch-2015}.
The interaction logs of such systems typically contain a record of the input to the system, the prediction made by the system, and the feedback. The feedback provides only partial information limited to the particular prediction shown by the system.
Offline learning algorithms tell us how data collected from interaction logs of one system can be used to optimize a new system~\cite{swaminathan-batch-2015,hofmann-contextual-2011}.
Interaction logs used in offline learning are usually biased towards the policy that collected the data.
To remove this bias, offline learning methods resort to inverse propensity scoring.
To use inverse propensity scoring, the logging policy must be stochastic and the corresponding propensity scores (probability of choosing the logged action) are recorded.
Using these propensity scores, it is possible to reweigh data samples to remove the bias.
An advantage of offline learning methods is that they do not require interactive experimental control.
Thus, there is no risk of hurting the user experience with these methods.
In other words, offline learning methods are safe by definition~\citep{jagerman-2019-model}.

Unlike \emph{online} methods, \ac{SEA}'s performance during the early stages of learning does not suffer and always stays at least as good as a baseline, making the method safe to use.
Compared to \emph{offline} methods, \ac{SEA} is capable of exploration, which makes it effective at finding areas of the action space that may have high reward.

\subsection{Safety in contextual bandits} 
There has been a growing interest in concepts related to safety for contextual bandit problems.
This is due to the fact that contextual bandit formulations are applied to automated decision making systems, where actions taken by the system can have a significant impact on the real world.
There are two main groups of work:  \emph{risk-aware} methods and \emph{conservative} methods.

Risk-aware methods~\cite{galichet2013exploration,huo2017risk,sun2016risk} are online learning algorithms that model the risk associated with executing certain actions as a cost which is to be constrained and minimized.
\citet{galichet2013exploration} introduce the concept of risk-awareness for the multi-armed bandit framework.
\citet{sun2016risk} extend the idea of risk-awareness to the adversial contextual bandit setting.
\citet{garcia2012safe} explore risk-awareness for the reinforcement learning paradigm.
Risk-aware methods use a separate type of feedback signal, in addition to the standard reward feedback, which is called \emph{risk}. Risk-aware methods aim to keep the cumulative risk below a specific threshold.
These types of methods are typically applied in fields like robotics where certain actions can be dangerous or cause damage.
A drawback is that the risk has to be explicitly quantified by the environment.
Designing a good risk feedback signal for \ac{IR} systems is subject to modeling biases and in some cases impossible, limiting the application of such methods.

Conservative methods~\cite{wu2016conservative} measure safety as a policy's performance relative to a baseline policy. 
A method is safe if its performance is always within some margin of the baseline policy.
The idea was first introduced for the multi-armed bandit case~\cite{wu2016conservative} and later extended to linear contextual bandits in the form of the CLUCB algorithm~\cite{kazerouni2016conservative}.
CLUCB is a safe conservative online learning method, which works by constructing confidence sets around the parameters of the policy.
Unfortunately, the method has to solve a constrained optimization problem every time an action has to be selected, which has a significant computational overhead.
In contrast, \ac{SEA} addresses this problem because it only needs to do two computationally efficient operations: update a lower confidence bound estimate and perform a gradient update step.
This makes \ac{SEA} applicable to larger and more complex datasets.

Finally, \citet{li-2019-bubblerank} introduce a complementary approach to \ac{SEA}, called BubbleRank, an algorithm that gradually improves upon an initial ranked list by exchanging higher-ranked less attractive items for lower- ranked more attractive items. \citeauthor{li-2019-bubblerank} define a safety constraint that is based on incorrectly-ordered item pairs in the ranked list, and prove that BubbleRank never violates this constraint with a high probability. We do not compare to BubbleRank in our experiments because it assumes a user model~\citep{chuklin-click-2015}, an assumption that is orthogonal to our experimental setup~\citep{jagerman-2019-model}.


\section{Conclusion}

We have proposed \ac{SEA}, a \emph{safe} \emph{online} learning algorithm for contextual bandit problems.
\ac{SEA} learns a new policy from the behavior of an existing baseline policy and then starts to execute actions from the new policy once its estimated performance is at least as good as that of the baseline.
This brings us the best of both worlds, achieving the performance of \emph{online} learning with the safety of \emph{offline} learning.

We perform extensive experimentation on two \ac{IR} tasks, \emph{text classification} and \emph{document ranking}.
In both tasks \ac{SEA} is safe.
It never performs worse than a baseline policy, whereas online methods are unsafe and suffer from suboptimal performance in the early stages of learning.
We observe that the user experience with \ac{SEA} is improved in the early stages of training, but may be suboptimal in later stages when compared to methods that converge much faster, such as LinUCB (although such methods are not generalizable to all datasets).
The final performance of a model learned with \ac{SEA} is comparable to other online algorithms and beats that of offline methods, which are incapable of exploration.
Finally, we find that \ac{BSEA}, a boundless version of \ac{SEA}, is empirically just as safe as \ac{SEA} while being able to explore faster and, as a result, outperform \ac{SEA} in many experimental conditions.
These results confirm that SEA does indeed trade off advantages and disadvantages of \emph{offline} and \emph{online} learning, in some scenarios outperforming online methods in the early stages of learning and having higher final performance than offline methods.
However, compared to purely \emph{online} approaches, \ac{SEA} may not be as effective in learning a good policy due to the fact that it is more conservative when exploring.
The conservative nature of \ac{SEA} implies that safety is not free, there is a possible performance cost involved for cases where \ac{SEA} is unable to effectively explore.

An interesting direction for future work is an extension of \ac{SEA} to non-linear models. \ac{SEA} builds on gradient descent and it is trivial to extend the method to use gradient-boosted decision trees or neural networks.
This line of work is especially applicable for the document ranking task where it is known that non-linear models can outperform linear models by a wide margin~\cite{wu2010adapting}.
Furthermore, another possibility for future work is to perform a study on safety for a broad range of online and offline methods, similar to~\cite{jagerman-2019-model}.
Specifically, it would be interesting to compare against recent work on safe online learning to re-rank~\cite{li-2019-bubblerank}.

\section*{Code and data}
To facilitate reproducibility of the results in this paper, we are sharing all resources used in this paper at \url{https://github.com/rjagerman/tois2019-safe-exploration-algorithm}.
This includes the training procedures and parameters of the online, offline, safe and baseline policies.

\begin{acks}
We thank Chang Li and Harrie Oosterhuis for valuable discussions.
We thank our anonymous reviewers for inspiring questions and helpful suggestions.
        
This research was partially supported by
Ahold Delhaize,
the Association of Universities in the Netherlands (VSNU),
the Innovation Center for Artificial Intelligence (ICAI),
and
the Netherlands Organization for Scientific Research (NWO) under project number 612.\-001.\-551.

All content represents the opinion of the authors, which is not necessarily shared or endorsed by their respective employers and/or sponsors.
\end{acks}
        
\bibliographystyle{ACM-Reference-Format}
\bibliography{paper} 

\end{document}